\newtheorem{definition}{Definition}
\newtheorem{example}{Example}
\newtheorem{theorem}{Theorem}
\newtheorem{proposition}{Proposition}
\title{Fair Diffusion Auctions}
\author {
    Zixin Gu\textsuperscript{\rm 1},
    Yaoxin Ge\textsuperscript{\rm 1},
    Yao Zhang\textsuperscript{\rm 2},
    Dengji Zhao\textsuperscript{\rm 1}
}
\begin{document}

\maketitle

\begin{abstract}
    Diffusion auction design is a new trend in mechanism design which extends the original incentive compatibility property to include buyers' private connection report. Reporting connections is equivalent to inviting their neighbors to join the auction in practice. 
    Then, the social welfare is collectively accumulated by all participants: reporting high valuations or inviting high-valuation neighbors.
    Hence, we can measure each participant's contribution by the marginal social welfare increase due to her participation.
    
    Therefore, in this paper, we introduce a new property called \textit{Shapley fairness} to capture participants' social welfare contribution and use it as a benchmark to guide our auction design for a fairer utility allocation.
    Not surprisingly, none of the existing diffusion auctions has ever approximated the fairness, because Shapley fairness depends on each buyer's own valuation and this dependence can easily violate incentive compatibility. 
    Thus, we combat this challenge by proposing a new diffusion auction called \textit{Permutation Diffusion Auction} (PDA) for selling $k$ homogeneous items, which is the first diffusion auction satisfying $\frac{1}{k+1}$-Shapley fairness, incentive compatibility and individual rationality.
    Moreover, PDA can be extended to the general combinatorial auction setting where the literature did not discover meaningful diffusion auctions yet.
\end{abstract}



\section{Introduction}

Auction design over social networks, known as diffusion auction design, explored many novel mechanisms since the first diffusion auction proposed by Li \textit{et al.} (2017)~\cite{ijcai2021p605,LI2022103631,zhao2022mechanism}. 
Different from the traditional auction design, diffusion auction considers buyers' private connections to other buyers. Buyers are asked to report their connections, which is equivalent to inviting their neighbors to join the market in practice (it is assumed that not all buyers are in the market initially), which enlarges the market for other purposes such as improving social welfare or the seller's revenue. 
We want to design a mechanism to 
 incentivize buyers to report their connections truthfully, so that all potential buyers can participate in the auction.
However, this is challenging because buyers' strategical action space has one more dimension (their connections) and all the traditional mechanisms are not suitable to prevent manipulations in the new space.

The first diffusion auction, called Information Diffusion Mechanism (IDM)~\cite{DBLP:conf/aaai/LiHZZ17}, was designed for selling a single item in a network. 
It prioritises cut-points in the network from the seller to the winner and gives them sensible utilities to reward them to connect to the winner.
Despite its novelty, IDM also has an inherent limitation: only the cut points on the path to reach the winner may gain a positive utility and the incentive for the other buyers to join the auction is very weak (their utilities are zero). This limitation weakens the applicability of IDM to actually enlarge a market. 
Thus, Zhang \textit{et al.} (2019)~\cite{zhang2020incentivize} considered about this limitation and proposed a variation to reward more buyers. It allows positive rewards for buyers who are not cut-points but are on some simple paths from the seller to the winner. 
However, this limitation was not fully resolved in their solution: (1) some buyers were still not allowed to get positive rewards; (2) the rewards are not related to their contributions.
The limitation justifies the necessity for fairer diffusion auctions to better motivate all buyers to invite their neighbors to join the auction.

To solve the limitation above, a proper measure for rewarding buyers who are not on paths connecting to the winner is required.
We consider a fair measure to satisfy the following requirements:
\begin{enumerate}
    \item The social welfare is distributed to the agents efficiently;
    \item agents with the same contributions have the same utility;
    \item if one agent always contributes more than the other, then she should receive a higher utility than the other;
    \item agents that do not contribute should receive $0$ utility.
\end{enumerate}

In cooperative game theory, it is proved that there exists a unique value, called Shapley value, satisfying these constraints~\cite{shapley1953value,roth1988introduction,young1985monotonic}.
Similarly, we can define \emph{Shapley contribution} 
as the measure here, which is the average marginal social welfare contribution of a player on all possible joining sequences.

Following the Shapley contribution, we define $\epsilon$-\emph{Shapley fairness} ($\epsilon$-SF), which quantifies how close the buyers' utilities are with their Shapley contributions, to evaluate the fairness of a diffusion auction. 
Under this, 
we show that all the existing diffusion auctions are not Shapley fair, which reflects the limitation we discussed above. The main difficulty is that Shapley fairness depends on each buyer's own valuation and the dependence can easily violate truthful reporting. 

Against this background, we design the very first fair diffusion auction, named \emph{Permutation Diffusion Auction} (PDA), for selling $k$ homogeneous items. PDA imitates the calculation process of the Shapley contribution, but eliminates the dependency of a buyer's payment on her valuation. We show that PDA satisfies $\frac{1}{k+1}$-Shapley fairness, incentive compatibility, and individual rationality.
Moreover, PDA can be generalized to combinatorial settings and still maintains its commitment to fairness, achieving $\frac{1}{n}$-Shapley fairness when selling items in a network with $n$ buyers.

\subsection{Our Contributions}
We study one fairness of diffusion auctions and introduce a new property called Shapley fairness to evaluate the fairness.
We then propose a new mechanism to approximate the fairness in all diffusion auction settings.
Specifically,
\begin{enumerate}
    \item We define the Shapley contribution by mimicing the Shapley value in a diffusion auction setting. Then we use Shapley contribution to define a fairness property. 
    \item For auctioning $k$ homogeneous items, we propose a Permutation Diffusion Auction (PDA) to achieve $\frac{1}{k+1}$-Shapley fair. 
    \item For combinatorial settings, we extend PDA to achieve at least $\frac{1}{n}$-Shapley fair with $n$ buyers. 
\end{enumerate}

\subsection{Related Work}

\noindent\textbf{Single-item Diffusion Auctions.} Information Diffusion Mechanism (IDM)~\cite{DBLP:conf/aaai/LiHZZ17}, the first diffusion auction, was designed for selling a single item in a network and it prevented manipulations in the new space by offering priorities to the cut points to the winner in the network. IDM treated these cut points as sellers, which made them gain a profit from reselling the item to incentivize them to report their connections truthfully. 
Follow-up mechanisms were then proposed.
Critical Diffusion Mechanism (CDM)~\cite{DBLP:conf/ijcai/LiHZY19} was developed from IDM for further improving the seller's revenue.
Fair Diffusion Mechanism (FDM)~\cite{zhang2020incentivize} rewarded non-cut-point buyers with redistribution techniques.
Sybil Cluster Mechanism (SCM)~\cite{chen2023sybil} achieved Sybil-proofness; 
Sequential Resale Auction (SRA)~\cite{liu2023distributed} focused on distributed scenarios;
Closest Winner of Myerson’s (CWM)~\cite{zhang2024optimal} was dedicated to maximizing the seller's revenue.
All existing mechanisms chose certain buyers on the paths from the seller to the winner to offer positive utilities. They are not Shapley fair because some contributed participants would receive zero utilities.

\noindent\textbf{Multi-item Diffusion Auctions.}
Diffusion auctions was also extensively studied in the context of selling multiple homogeneous items.
The Generalized Information Diffusion Mechanism (GIDM), the Distance-based Network Auction Mechanism for Multi-unit, and the Unit-demand Buyers (DNA-MU)~\cite{kawasaki2020strategy} were unsuccessful in guaranteeing IC, thereby rendering the achievement of an IC mechanism a significant challenge.
Two IC mechanisms were then proposed: 
the Layer-based Diffusion Mechanism (LDM)~\cite{liu2023diffusion} localized competition within layers based on the buyers' distances from the seller;
the Multi-unit Diffusion Auction Mechanism (MUDAN)~\cite{fang2023multi} iteratively allocated items to winners during graph exploration. 
None of these mechanisms are Shapley fair as they only allow partial buyers to achieve positive utilities.

\noindent\textbf{Shapley Value in Auctions.}
The balanced winner contribution rule (BWC)~\cite{lindsay2018shapley} was proposed for allocating surplus to winners with modified Shapley value in auctions and exchanges, which is proved to be not incentive compatible.
Nested knockout~\cite{graham1990differential} was proposed to frequently distribute collusive gains among members in a bidder coalition in auctions, giving out the expected payments equal the Shapley value in many settings.
A Shapley Value based profit allocation scheme (SPA)~\cite{pan2009fair} was proposed to distribute the profit among the bidding nodes according to their marginal contributions in the spectrum auction. 
All these work sought for a payoff distribution that matched the Shapley value on behalf of fairness in traditional settings, which was not compatible with truthful reporting. In our paper, we consider a new trending model of diffusion auctions, and aims at a good approximation of Shapley contribution with the hard constraint of incentive compatibility.

\section{Preliminaries}

Consider an auction that a seller $s$ sells $k \geq 1$ homogeneous items in a social network with $n$ buyers, notated as $N$.
The social network is modeled as a graph $G=(V,E)$, where $V = N \cup \{s\}$ and $E$ is the edge set, representing the social relationships between agents. We use $e_{ij}$ to represent the edge between agent $i$ and $j$.

Each buyer $i \in N$ has a private type $\theta_i=(v_i,r_i)$, where $v_i$ is her valuation and $r_i$ shows her social relationship.
$v_i(q)$ denotes agent $i$'s valuation of $q \in \{0, 1, \ldots, k\}$ items, which we restrict to be marginal non-increasing, i.e. $v_i(q+1) - v_i(q) \leq v_i(q) - v_i(q-1)$. We assume that the valuation is normalized, which means $v(0)=0$.
$r_i=\{j \mid e_{ij}\in E\}$ denotes buyer $i$'s direct neighbors in the social network. 
Specially, the neighbors of the seller $r_s$, called initial buyers, are the only buyers knowing that $s$ is selling the item initially.
Let $\Theta_i$ be the type space of buyer $i$ and $\Theta=(\Theta_1,...,\Theta_n)$ be the joint type space of all buyers.

Diffusion auction mechanisms ask each buyer to report not only her valuation but also neighbors, so that the seller can invite more potential buyers to join. Let $v'_i$ denote the reported valuation and $r'_i \subseteq r_i$ be the actual neighbor set reported by $i$. We call $\theta'_i=(v'_i,r'_i)$ the reported type of $i$ and let $\boldsymbol{\theta}'_{-i}$ represent the reported type profile of all buyers except for buyer $i$. Let $\boldsymbol{\theta}'=(\theta'_1,\theta'_2,\ldots,\theta'_n)=(\theta'_i,\boldsymbol{\theta}'_{-i})$ represent the reported type profile of all buyers. 

A buyer can join the auction only when she is invited, but to make the model mathematically clean and also make the related definitions more accessible, we assume all buyers in network report to the mechanism (this does not mean that we require this in practice).
Given all buyers' reports $\boldsymbol{\theta}'$, we need to decide who can actually join the auction. We define the \emph{feasible} set $F(\boldsymbol{\theta}')\subseteq V$ as the set of agents connected to $s$ in the reported network graph $G(\boldsymbol{\theta}')=(V,E(\boldsymbol{\theta}'))$, where $E(\boldsymbol{\theta}')=\{e_{ij}\mid j\in r'_i\}$.
The auction can only use the reports from the feasible set to make the decisions.
For an infeasible buyer $i \notin F(\boldsymbol{\theta}')$, she actually cannot participate in the auction in practice. Now the definition of a diffusion auction mechanism is formalized as follows.

\begin{definition}
    A  diffusion auction mechanism $\mathcal{M}=(\pi,p)$ is defined by an allocation and a payment policy $(\pi,p) = \{(\pi_i,p_i)\}_{i\in N}$, where $\pi_i: \Theta \to \{0,1, \ldots, k\}$ and $p_i: \Theta \to \mathbb{R}$ are the allocation and payment functions for $i$. 
    For all reported type profile $\boldsymbol{\theta}' \in \Theta$, it is satisfied that 
    \begin{enumerate}
        \item for infeasible buyers $i \notin F(\boldsymbol{\theta}')$, $\pi_i(\boldsymbol{\theta}')=0$, $p_i(\boldsymbol{\theta}')=0$,         
        \item for feasible buyers $i \in F(\boldsymbol{\theta}')$, $\pi_i(\boldsymbol{\theta}')$ and $p_i(\boldsymbol{\theta}')$ are independent of the reports of the infeasible buyers, and
        \item $\sum_{i\in F(\boldsymbol{\theta}')}\pi_i(\boldsymbol{\theta'}) \leq k$.
    \end{enumerate}
\end{definition}

We denote the set of all allocations satisfying above constraints given $\boldsymbol{\theta'}$ and $k$ as $\Pi(\boldsymbol{\theta'},k)$. In this definition, $\pi_i(\boldsymbol{\theta}')$ represents the quantity of the items allocated to $i$; $p_i(\boldsymbol{\theta}')$ indicates the amount $i$ should pay to the seller $s$ and $p_i(\boldsymbol{\theta}')<0$ means that buyer $i$ can receive $|p_i(\boldsymbol{\theta}')|$ from the seller $s$. 

Given a reported type profile $\boldsymbol{\theta}'$ and a diffusion auction mechanism $\mathcal{M}=(\pi,p)$,
the \textit{utility} of a buyer $i$ with type $\theta_i=(v_i,r_i)$ is determined by her allocation and payment:
$$u_i(\theta_i, \boldsymbol{\theta}', (\pi,p))= v_i(\pi_i(\boldsymbol{\theta}')) - p_i(\boldsymbol{\theta}').$$
Note that for a randomized mechanism, $u_i$ is calculated in expectation. In the paper, we will define a randomized mechanism by a probability distribution on deterministic ones.

The \textit{social welfare} (SW) of an allocation $\pi$ is 
$$\mathsf{SW}(\boldsymbol{\theta}',\pi)=\sum_{i\in N}v'_i(\pi_i).$$
An allocation is called to be \textit{efficient} if it maximizes SW: 
$$\pi^\ast \in {\arg\max}_{\pi \in \Pi(\boldsymbol{\theta'},k)}\mathsf{SW}(\boldsymbol{\theta}',\pi)$$
with the corresponding social welfare 
$$\mathsf{SW}^\ast(\boldsymbol{\theta'}) = \mathsf{SW}(\boldsymbol{\theta}',\pi^\ast) = {\max}_{\pi \in \Pi(\boldsymbol{\theta'},k)}\mathsf{SW}(\boldsymbol{\theta}',\pi).$$ 

We consider two key properties for a diffusion auction:

\begin{definition}
\label{def:IC}
    A diffusion auction mechanism $(\pi,p)$ is incentive compatible (IC) if for all buyer $i\in N$, all type $\theta_i\in \Theta_i$ all reports $\theta'_i\in \Theta_i$ and $\boldsymbol{\theta}'_{-i}\in \Theta_{-i}$, we have
    $$u_i(\theta_i, (\theta_i,\boldsymbol{\theta}'_{-i}), (\pi,p)) \geq u_i(\theta_i, (\theta'_i,\boldsymbol{\theta}'_{-i}), (\pi,p)).$$
\end{definition} 
\begin{definition}
\label{def:IR}
    A diffusion auction mechanism $(\pi,p)$ is individually rational (IR) if for all buyer $i\in N$, all $\theta_i\in \Theta_i$, and all $\boldsymbol{\theta}'_{-i}\in \Theta_{-i}$, we have $u_i(\theta_i, (\theta_i,\boldsymbol{\theta}'_{-i}), (\pi,p))\geq 0.$
\end{definition}

Intuitively, IC guarantees that buyers report their valuation truthfully and invite all their neighbors; IR guarantees buyers who report truthfully to get non-negative utilities.

Beyond the previous studies, we aim to propose a diffusion mechanism where the payment is determined fairly according to the buyers' contribution to the social welfare. To evaluate this, we formalize the definition of fairness property following the Shapley value, which is a classic solution to share value `fairly' among players in cooperative games~\cite{shapley1953value}.
We use each participant's Shapley contribution for generating social welfare in the network as a reference for utility distribution in a diffusion auction.

Denote $B \subseteq V$ as a coalition of agents.
Let $\mathsf{SW}^\ast(\boldsymbol{\theta}', B)$ be the maximum social welfare can be achieved by coalition $B$, which is the social welfare of the efficient allocation only among the feasible buyers in $B$ who are connected to the seller via $B$ only. Specially, if the seller $s$ is not in $B$, then no buyers are feasible within $B$ and the social welfare can be achieved in $B$ is always zero.
\begin{definition}
    For any coalition $B\subseteq V$, the \emph{feasible set} of $B$, denoted as $F(\boldsymbol{\theta'},B)$, is the set of buyers connected to the seller via $B$ only (specifically, $F(\boldsymbol{\theta'},B)=\emptyset$ if $s\notin B$). Correspondingly, the \emph{maximum social welfare}
    within $B$ is 
    $$\mathsf{SW}^*(\boldsymbol{\theta'},B)=\max_{\pi \in \Pi(\boldsymbol{\theta'},k) \text{~and~} \pi_{i \notin F(\boldsymbol{\theta'}, B)} = 0} \mathsf{SW}(\boldsymbol{\theta'},\pi)$$
\end{definition}

Let $O(V)$ be the set of all permutations of $V$, and $o\in O(V)$ denote an order. 
Let $i\prec_o j$ denote that $i$ precedes $j$ in $o$. 
In order $o$, $o_{\prec i} = \{j\mid j\prec_o i\}$ represents the set of agents preceding $i$, and $o_{\preceq i} = o_{\prec i} \cup \{i\}$.
Then, the \textit{marginal social welfare increase} of $i$ in order $o$ is 
$$\mathsf{MC}_i^o(\boldsymbol{\theta}') = \mathsf{SW}^\ast(\boldsymbol{\theta}',o_{\preceq i})-\mathsf{SW}^\ast(\boldsymbol{\theta}',o_{\prec i}).$$ 

Then, the Shapley contribution of an agent is determined as her averaged marginal social welfare increase on all permutations of $V$.
\begin{definition}
    The \textbf{Shapley contribution} of agent $i\in V$ in a diffusion auction is defined as:
    {\small\begin{align*}
        & \phi_i(\boldsymbol{\theta}')
        = \frac{1}{|O(V)|}\sum_{o\in O(V)}\mathsf{MC}_i^o(\boldsymbol{\theta}') = \\
        & \sum_{B\subseteq V\setminus\{i\}} \frac{|B|!(|V\setminus B| -1)!}{|V|!} [\mathsf{SW}^\ast(\boldsymbol{\theta}', B\cup\{i\}) - \mathsf{SW}^\ast(\boldsymbol{\theta}', B)].
    \end{align*}}
\end{definition}

We now define our new property $\epsilon$-Shapley fair ($\epsilon$-SF) by comparing buyers' utilities with their Shapley contributions.
\begin{definition}
\label{def:e-SF}
    A diffusion auction mechanism $\mathcal{M}=(\pi, p)$ is \textbf{$\boldsymbol{\epsilon}$-Shapley fair} ($\epsilon$-SF), $\epsilon\in (0,1]$, if for all \textbf{buyer} $i\in N$, for all type $\theta_i\in \Theta_i$, all other reports $\boldsymbol{\theta}'_{-i} \in \Theta_{-i}$, we have 
    $$\epsilon \cdot \phi_i((\theta_i,\boldsymbol{\theta}'_{-i})) \leq u_i(\theta_i,(\theta_i,\boldsymbol{\theta}'_{-i}),(\pi, p)) \leq \phi_i((\theta_i,\boldsymbol{\theta}'_{-i})).$$
\end{definition}

\textit{$\epsilon$-SF} requires the utilities of all buyers that report truthfully are at least $\epsilon$ times of their Shapley contributions and should not exceed their Shapley contributions. Specially, if $\epsilon = 1$, then a buyer's utility is equal to her Shapley contribution. A mechanism is not SF, if such $\epsilon$ does not exist. 

\begin{figure}[htbp]
    \centering
    \includegraphics[width=0.8\columnwidth]{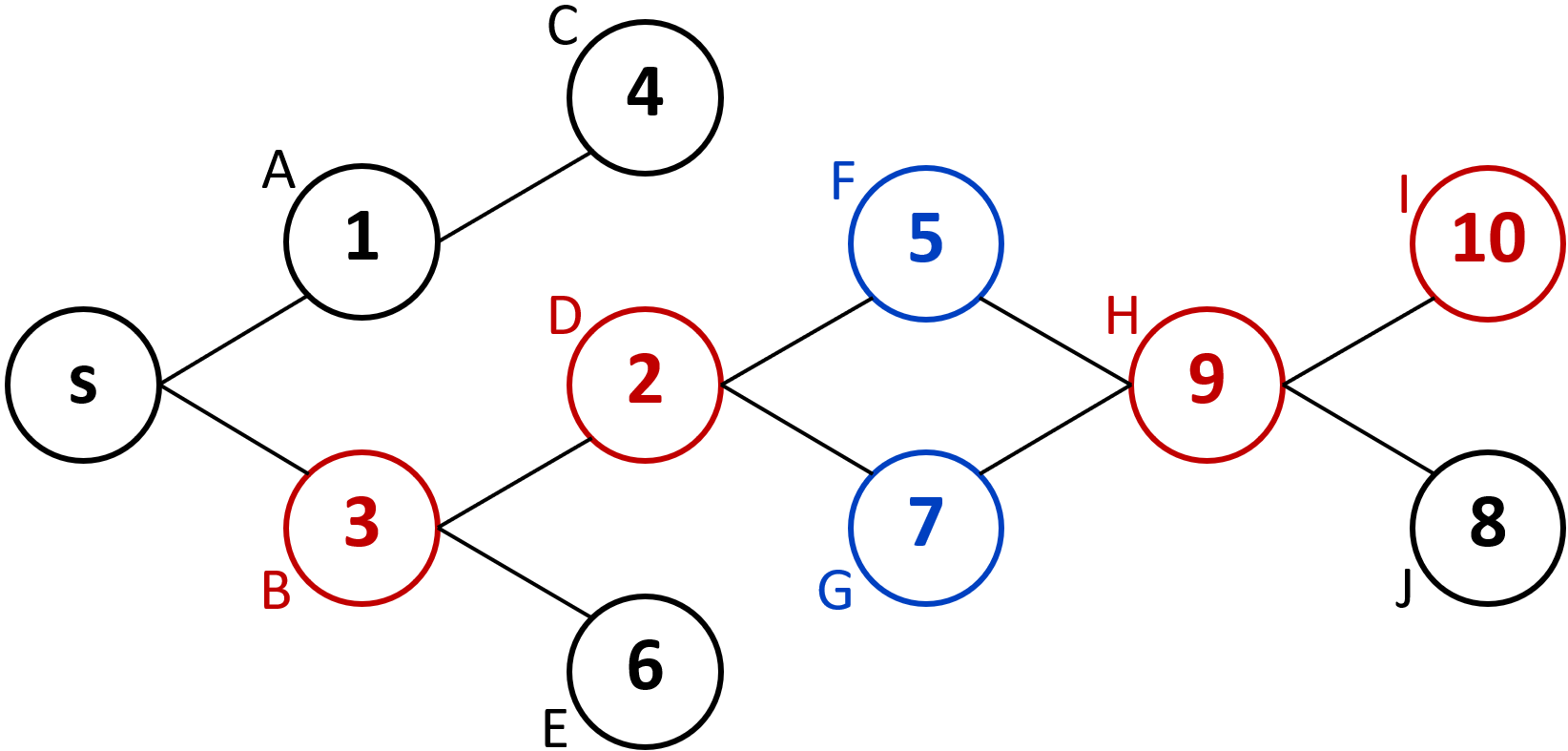}
    \caption{A network example of diffusion auctions. A single item is sold in this network. Each node represents one participant with her valuation in the circle ($s$ is the seller). The edges show the social relationships among them.}
    \label{fig:example-color}
\end{figure}

\noindent\textit{Remark.} \textbf{None of existing diffusion auction mechanisms are Shapley fair.}
It can be showed via a network example shown in Figure~\ref{fig:example-color}.
For single-item mechanisms,
IDM~\cite{DBLP:conf/aaai/LiHZZ17}, CDM~\cite{DBLP:conf/ijcai/LiHZY19}, and SCM~\cite{chen2023sybil} only allow the red buyers (the cut-points from seller $s$ to the buyer who reports the highest value) to have positive utilities; 
CWM~\cite{zhang2024optimal} only allows buyer $H$ (the winner) to have positive utility;
FDM~\cite{zhang2020incentivize} and SRA~\cite{liu2023distributed} allow the red and blue buyers (buyers on some simple paths from seller $s$ to the buyer who reports the highest value) to have positive (expected) utilities.
We can observe that buyers $A, C$, and $E$ always get 0 utilities in all these mechanisms.
However, the Shapley contributions of all buyers except for $J$ in this example are positive. 
Hence, they are not Shapley fair.
For multi-item mechanisms, 
VCG~\cite{vickrey1961counterspeculation,clarke1971multipart,groves1973incentives} for all multi-item settings, and GIDM~\cite{zhao2018selling}, DNA-MU~\cite{kawasaki2020strategy}, LDM~\cite{liu2023diffusion}, MUDAN~\cite{fang2023multi} for homogeneous multi-item setting, are all not Shapley fair.
Among multi-item mechanisms,  VCG~\cite{vickrey1961counterspeculation,clarke1971multipart,groves1973incentives} and GIDM~\cite{zhao2018selling} allow only the cut-points from seller $s$ to winners to have positive utilities; 
LDM~\cite{liu2023diffusion} allows only the first 3 layers of buyers to potentially receive positive utilities; 
MUDAN~\cite{fang2023multi} and DNA-MU~\cite{kawasaki2020strategy} only benefit the winners. 
Therefore, all of them are not Shapley fair.

\section{Permutation Diffusion Auction}

In this section, we propose a diffusion auction mechanism called Permutation Diffusion Auction (PDA), and show that it achieves $\frac{1}{k+1}$-SF, together with IC and IR. 

\subsection{The Definition of PDA}
The main idea of PDA is first randomly select an order from $O(V)$, and then traverse all agents to check their marginal social welfare increase. The process of the traverse is described informally as follows (and formally in Algorithm~\ref{alg:PDA}).

\begin{itemize}
    \item If the current traversed agent $i$ is the seller or she cannot be feasible by only traversed agents, we just skip her; 
    \item If the current traversed buyer $i$ can be feasible by only traversed agents, we calculating an efficient allocation for remaining unsold item (if any) within these agents. If the allocation suggests that $i$ will be allocated with several items, then $i$ will receive these items and pay the loss of social welfare of the coalition before $i$ because of her winning items. If the allocation does not suggest that $i$ could be the winner of some items, we do not perform any allocation in this step, but we will reward buyer $i$ with her marginal social welfare increase.
\end{itemize}

One key point of the above process is that we decide each buyer's allocation and payment when she is traversed instantly, and will never change it in the future. That is to say, we should keep the allocation of sold items when selling the remaining items to a newly traversed buyer, and each buyer's allocation is affected by earlier traversed buyers' allocations. 
More precisely, denote $\pi^{o_{\prec i}}$ as the allocation right before $i$ in order $o$. When $i$ is traversed,
we denote the \emph{efficient} allocation with the irrevocable $\pi^{o_{\prec i}}$ as ${\pi^\ast}(\boldsymbol{\theta}', o_{\preceq i}, \pi^{o_{\prec i}})$ and the corresponding maximized social welfare as $\mathsf{SW}^\ast(\boldsymbol{\theta}', o_{\preceq i}, \pi^{o_{\prec i}})$.
They can be decided by the following optimization problem.

\begin{align*}
    \underset{\pi}{\text{maximize}} \quad&\mathsf{SW}(\boldsymbol{\theta'}, \pi)\\
    s.t. \quad&\pi \in \Pi(\boldsymbol{\theta'}, k) ,\\
                     &\pi_{i \notin F(\boldsymbol{\theta'}, o_{\preceq i})} = 0,\\
                     &\pi_j \geq \pi_j^{o_{\prec i}}, \forall j\in o_{\prec i}.    
\end{align*}

Then, in PDA, we use ${\pi^\ast}(\boldsymbol{\theta}', o_{\preceq i}, \pi^{o_{\prec i}})$ to define the allocation of buyer $i$, i.e., $\pi_i = {\pi^\ast}(\boldsymbol{\theta}', o_{\preceq i}, \pi^{o_{\prec i}})$. Notice that we will not update the allocation $\pi^{o_{\prec i}}$ for those who has been traversed before $i$. The payment for buyer $i$ is the difference of the maximized social welfare for buyers before $i$ in $o$ within the constraints of $\pi^{o_{\prec i}}$ and their welfare in ${\pi^\ast}(\boldsymbol{\theta}', o_{\preceq i}, \pi^{o_{\prec i}})$. For the former one, to be convenient, we use a similar notation $\mathsf{SW}^\ast(\boldsymbol{\theta'}, o_{\prec i}, \pi^{o_{\prec i}})$ to represent the maximized social welfare can be achieved by buyers in $o_{\prec i}$, without decreasing anyone's allocated item in $\pi^{o_{\prec i}}$. For the latter one, it just equals to $\mathsf{SW}^\ast(\boldsymbol{\theta'}, o_{\preceq i}, \pi^{o_{\prec i}}) - v'_i(\pi_i)$.

\begin{algorithm}[htb]
    \caption{Permutation Diffusion Auction (PDA)}
    \label{alg:PDA}
    ~\textbf{Input}: $k, \boldsymbol{\theta}'$
    
    ~\textbf{Output}: $\pi,p$
    \begin{algorithmic}[1] 
        \STATE Initialize $\{\pi_i\}_{i \in N} = 0^N$ and $\{p_i\}_{i \in N} = 0^N$.
        \STATE Select one order $o$  from $O(V)$ uniformly. 
        \FOR{\textbf{each} $i$ in $o$ and $i \neq s$}
            \STATE $\pi_i \leftarrow \pi^\ast_i(\boldsymbol{\theta'}, o_{\preceq i}, \pi^{o_{\prec i}})$
            \STATE $p_i \leftarrow \mathsf{SW}^\ast(\boldsymbol{\theta'}, o_{\prec i}, \pi^{o_{\prec i}}) - \mathsf{SW}^\ast(\boldsymbol{\theta'}, o_{\preceq i}, \pi^{o_{\prec i}}) + v'_i(\pi_i)$
            \STATE \textbf{if} all items are allocated \textbf{then} \textbf{break}
        \ENDFOR
    \end{algorithmic}
\end{algorithm}

We note that PDA could be performed in polynomial time because: 1) we only traverse one randomly selected order so that the number of total steps is at most $n+1$; 2) the efficient allocation in each step can be computed by a polynomial greedy method since the items are homogeneous with diminishing marginal valuations.

\begin{example}
\label{eg: PDA-running}    
    We show a running example of PDA that a seller sells a single item in the network shown as Figure~\ref{fig:example-result}.
    In Figure~\ref{fig:example-result}(a) and Figure~\ref{fig:example-result}(b), assume that agents $C,I,B,s,F,J$, and $D$ have been traversed sequentially. Under this scenario, the item is not allocated among these buyers, and the current feasible set is comprised of $s, B, D$, and $F$, with an associated maximum social welfare of $5$.
    
    Now, if the next traversed agent is buyer $G$ as shown in Figure~\ref{fig:example-result}(a), then she wins the item as her reported valuation $v'_G$ is the highest among all feasible buyers, and her payment equals $5-7+7 = 5$, namely her reported valuation of the item minus her marginal contribution on the maximum social welfare. The remaining buyers that have not been traversed will all get nothing.

    If the next traversed agent is buyer $H$ as shown in Figure~\ref{fig:example-result}(b), then
    she will make $H, I$, and $J$ feasible. The maximum social welfare will rise to $10$ and $H$ will be rewarded by her marginal contribution $10-5 = 5$. After $H$, none of $G, A$, and $E$ will win the item because the maximum social welfare keeps $10$, which is higher than their reported valuations. They will all get 0 payments as their marginal contributions are $0$. We can notice that in this case, the auction ends without allocating the item to any buyers.

    Finally, Since PDA is a randomized mechanism, we consider its expected utility distribution. Figure~\ref{fig:example-result}(c) reveals that the distribution of expected utilities of all buyers in PDA are highly close to their Shapley contributions. 
    
    \begin{figure}[tb]
    \centering
    \begin{subfigure}
        \centering
        \includegraphics[width=.73\columnwidth]{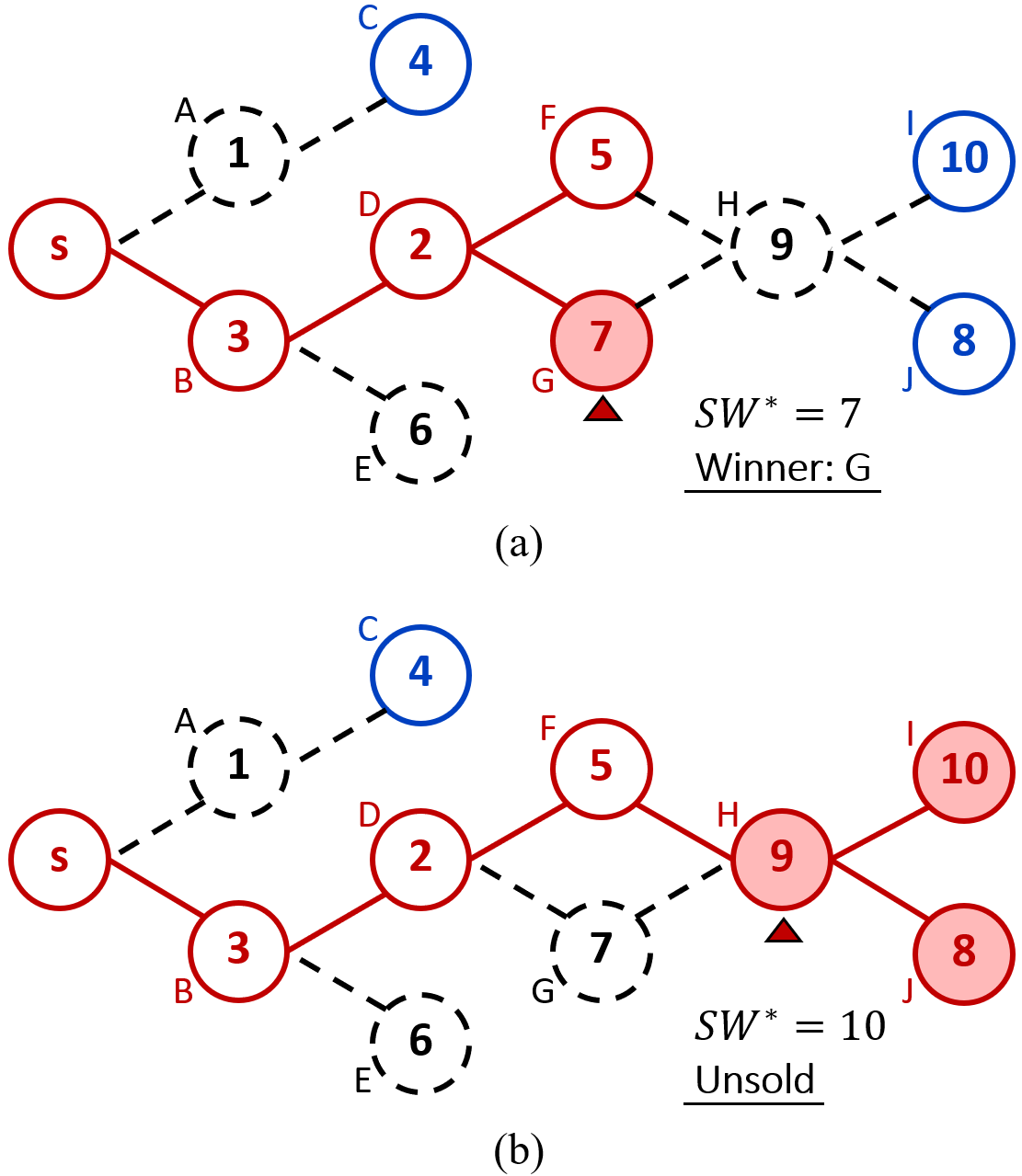}
    \end{subfigure}
    \centering
    \begin{subfigure}
        \centering
        \includegraphics[width=.9\columnwidth]{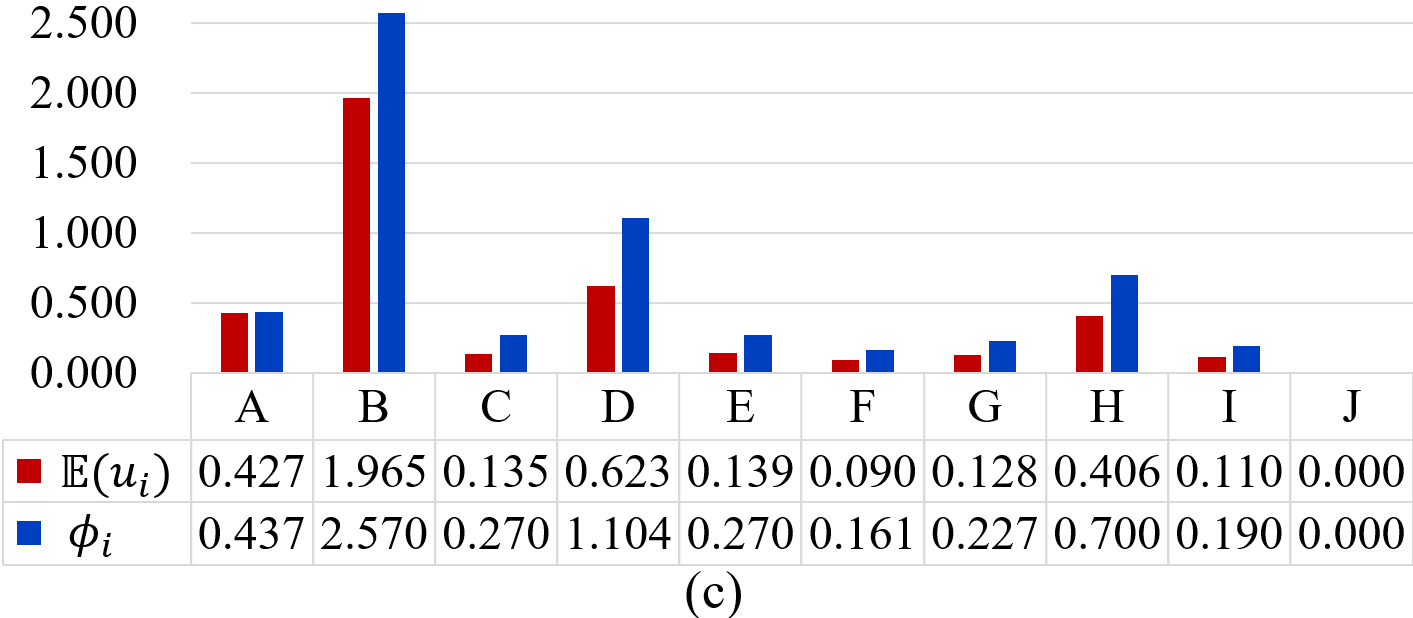}
    \end{subfigure}
    \caption{An example of PDA. (a)\&(b) The dashed nodes represent buyers who have not joined yet, the blue nodes signify buyers who have joined but are not feasible, and the red nodes indicate feasible participants. Specifically, the triangles mark the newly joined buyers, and the filled nodes represent those who have become feasible due to the entry of the new buyers. (c) The expected utility in PDA (red) and the Shapley contribution (blue) of buyers.}
    \label{fig:example-result}
\end{figure}
\end{example}

\subsection{IC and IR Analysis}
Firstly, we show that PDA satisfies the property of IR and IC.
\begin{theorem}
\label{thm:PDA-IRIC}
    PDA is individually rational and incentive compatible.
\end{theorem}

\begin{proof}
    \textbf{IR.} Whichever order selected by PDA, the utility of buyer $i$ with any order $o\in O(V)$ will be
    \begin{align*}
        & \quad u_i^o(\theta_i, \boldsymbol{\theta}') = v_i(\pi_i) - p_i    
        = \ v_i(\pi_i) \\ 
        & \qquad - \mathsf{SW}^\ast(\boldsymbol{\theta'}, o_{\prec i}, \pi^{o_{\prec i}}) + \mathsf{SW}^\ast(\boldsymbol{\theta'}, o_{\preceq i}, \pi^{o_{\prec i}}) - v'_i(\pi_i) \\
        &\overset{(\theta'_i = \theta_i)}{=}  \mathsf{SW}^\ast(\boldsymbol{\theta'}, o_{\preceq i}, \pi^{o_{\prec i}}) - \mathsf{SW}^\ast(\boldsymbol{\theta'}, o_{\prec i}, \pi^{o_{\prec i}}) \geq 0. 
    \end{align*}
    
    Because the maximum social welfare can be achieved is monotone non-decreasing with the enlarging of the set of buyers, we have $\mathsf{SW}^\ast(\boldsymbol{\theta'}, o_{\preceq i}, \pi^{o_{\prec i}}) \geq \mathsf{SW}^\ast(\boldsymbol{\theta'}, o_{\prec i}, \pi^{o_{\prec i}})$; thus the last inequality holds. Then, $u_i^o(\theta_i, (\theta_i, \boldsymbol{\theta}'_{-i}))) \geq 0$ for any sampled $o$.
    Therefore, PDA is IR.

    \noindent\textbf{IC.} Whichever order $o\in O(V)$ selected, for all buyer $i$, all type $\theta_i\in \Theta_i$, all reports $\theta'_i\in\Theta_i$ and $\boldsymbol{\theta}'_{-i}\in\Theta_{-i}$, we denote $\pi_i^o(\cdot)$ and $u_i^o(\cdot)$ as $i$'s allocation and utility function within $o$, separately.

    As for any buyer $i \in N$, $\pi_i^o(\boldsymbol{\theta'}) = \pi^\ast_i(\boldsymbol{\theta'}, o_{\preceq i}, \pi^{o_{\prec i}})$, which maximizes the social welfare when traversing at $i$, 
    \begin{align*}
        \mathsf{SW}^\ast((\theta_i, \boldsymbol{\theta}'_{-i}), o_{\preceq i} & , \pi^{o_{\prec i}}) 
        \geq \mathsf{SW}^\ast((\theta'_i, \boldsymbol{\theta}'_{-i}), o_{\preceq i}, \pi^{o_{\prec i}}) \\
        & - v'_i({\pi}^o_i((\theta'_i, \boldsymbol{\theta}'_{-i}))) + v_i({\pi}^o_i((\theta'_i, \boldsymbol{\theta}'_{-i})),
    \end{align*}
    where the right-hand side 
    means the social welfare achieved in $(\theta_i, \boldsymbol{\theta}'_{-i})$ with the allocation under $(\theta'_i, \boldsymbol{\theta}'_{-i})$.

    On the other hand, since $i \notin o_{\prec i}$, then by definition,
    it has $\mathsf{SW}^\ast((\theta_i, \boldsymbol{\theta}'_{-i}), o_{\prec i}, \pi^{o_{\prec i}}) = \mathsf{SW}^\ast((\theta'_i, \boldsymbol{\theta}'_{-i}), o_{\prec i}, \pi^{o_{\prec i}})$ so that PDA is IC by following inequality:
    \begin{align*}
        u_i^o & (\theta_i, (\theta_i, \boldsymbol{\theta}'_{-i}))\\
        = &\ \mathsf{SW}^\ast((\theta_i, \boldsymbol{\theta}'_{-i}), o_{\preceq i}, \pi^{o_{\prec i}}) - \mathsf{SW}^\ast((\theta_i, \boldsymbol{\theta}'_{-i}), o_{\prec i}, \pi^{o_{\prec i}}) \\
        \geq &\ \mathsf{SW}^\ast((\theta'_i, \boldsymbol{\theta}'_{-i}), o_{\preceq i}, \pi^{o_{\prec i}})  - v'_i({\pi}^o_i((\theta'_i, \boldsymbol{\theta}'_{-i}))) \\
        &\ + v_i({\pi}^o_i((\theta'_i, \boldsymbol{\theta}'_{-i})) - \mathsf{SW}^\ast((\theta'_i, \boldsymbol{\theta}'_{-i}), o_{\prec i}, \pi^{o_{\prec i}}) \\
        = &\ u_i^o(\theta_i, (\theta'_i, \boldsymbol{\theta}'_{-i})).
    \end{align*}
\end{proof}

\subsection{Fairness Analysis}
Now we will analyze the fairness can be achieved by PDA. We show the lower bound of the approximation ratio for Shapley-fairness in the following theorem.
\begin{theorem}
\label{thm:PDA-SF}
    PDA is $\frac{1}{k+1}$-Shapley fair for selling $k$ items. 
\end{theorem}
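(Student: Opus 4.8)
## Proof Proposal

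The plan is to establish both bounds in Definition~\ref{def:e-SF} separately: the upper bound $u_i \le \phi_i$ and the lower bound $u_i \ge \frac{1}{k+1}\phi_i$, where throughout we assume $i$ reports truthfully ($\theta'_i = \theta_i$) so that $u_i^o(\theta_i,\boldsymbol{\theta}') = SW^\ast(\boldsymbol{\theta}',o_{\preceq i},\pi^{o_{\prec i}}) - SW^\ast(\boldsymbol{\theta}',o_{\prec i},\pi^{o_{\prec i}})$ by the IR computation in Theorem~\ref{thm:PDA-IRIC}, and the reported profile equals the true profile on the relevant coordinates. The key object is the per-order utility $u_i^o$, and we compare it order-by-order against the per-order marginal $MC_i^o(\boldsymbol{\theta}') = SW^\ast(\boldsymbol{\theta}',o_{\preceq i}) - SW^\ast(\boldsymbol{\theta}',o_{\prec i})$, whose average over $O(V)$ is exactly $\phi_i$.

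For the \textbf{upper bound}, the first step is to show $SW^\ast(\boldsymbol{\theta}',o_{\preceq i},\pi^{o_{\prec i}}) \le SW^\ast(\boldsymbol{\theta}',o_{\preceq i})$ and $SW^\ast(\boldsymbol{\theta}',o_{\prec i},\pi^{o_{\prec i}}) \ge SW^\ast(\boldsymbol{\theta}',o_{\prec i})$: the former because adding the irrevocability constraint $\pi_j \ge \pi_j^{o_{\prec i}}$ can only shrink the feasible region, the latter because $\pi^{o_{\prec i}}$ is itself a feasible allocation within $o_{\prec i}$ (it was produced by PDA restricted to those agents), so the constrained optimum is at least its value — actually I need $SW^\ast(\boldsymbol{\theta}',o_{\prec i},\pi^{o_{\prec i}}) = SW^\ast(\boldsymbol{\theta}',o_{\prec i})$, which holds because any allocation feasible in $o_{\prec i}$ and dominating $\pi^{o_{\prec i}}$ componentwise is still feasible, and conversely $\pi^{o_{\prec i}}$ was an efficient allocation for $o_{\prec i}$ at the time the last item before $i$ was processed... this needs care, so the cleaner route is to argue $SW^\ast(\boldsymbol{\theta}',o_{\prec i},\pi^{o_{\prec i}}) \ge SW^\ast(\boldsymbol{\theta}',o_{\preceq i}) - v_i(\pi_i^o)$ is false in general; instead I combine the two inequalities directly: $u_i^o \le SW^\ast(\boldsymbol{\theta}',o_{\preceq i}) - SW^\ast(\boldsymbol{\theta}',o_{\prec i},\pi^{o_{\prec i}})$, and then bound $SW^\ast(\boldsymbol{\theta}',o_{\prec i},\pi^{o_{\prec i}}) \ge SW^\ast(\boldsymbol{\theta}',o_{\prec i})$ using the fact that relaxing from the constrained problem to the unconstrained $o_{\prec i}$ problem can only help — wait, that is the wrong direction. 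The correct observation is: $\pi^{o_{\prec i}}$ arises as $\pi^\ast$ of some earlier prefix subproblem, hence is already efficient for $o_{\prec i}$ (no later-in-$o_{\prec i}$ agent ever revised it downward, and the optimization at each step is over all agents in the current prefix), giving $SW(\boldsymbol{\theta}',\pi^{o_{\prec i}}) = SW^\ast(\boldsymbol{\theta}',o_{\prec i})$ and therefore $SW^\ast(\boldsymbol{\theta}',o_{\prec i},\pi^{o_{\prec i}}) = SW^\ast(\boldsymbol{\theta}',o_{\prec i})$. Then $u_i^o \le SW^\ast(\boldsymbol{\theta}',o_{\preceq i}) - SW^\ast(\boldsymbol{\theta}',o_{\prec i}) = MC_i^o$ for every $o$, and averaging gives $u_i \le \phi_i$.

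For the \textbf{lower bound}, the strategy is to show that for every order $o$ there is a related order (or set of orders) $\tilde{o}$ where $MC_i^{\tilde o}$ is large and $u_i^{\tilde o}$ controls it, losing at most a factor $k+1$. Concretely, $u_i^o = 0$ precisely when $i$'s arrival causes no increase in the irrevocably-constrained welfare; the trouble is an order where $MC_i^o > 0$ but some of $i$'s ``natural'' contribution was already captured (irrevocably allocated) by earlier buyers. The cleanest approach: fix the set $B = o_{\prec i}$ and note $MC_i^o$ depends only on $B$, whereas $u_i^o$ depends on the internal order of $B$ (through $\pi^{o_{\prec i}}$). I will show $\mathbb{E}_{\text{orderings of }B}[u_i^o] \ge \frac{1}{k+1}(SW^\ast(\boldsymbol{\theta}',B\cup\{i\}) - SW^\ast(\boldsymbol{\theta}',B))$, and then average over $B$ with the Shapley weights. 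To prove this inner claim, observe that the at-most-$k$ items form the entire ``resource'': in the worst case all $k$ items get irrevocably placed before $i$, contributing the remaining $SW^\ast(\boldsymbol{\theta}',B\cup\{i\}) - SW^\ast(\boldsymbol{\theta}',B)$ only when $i$ happens to be early enough among the buyers who ``want'' those items. A counting argument on the positions of the (at most $k$) ``pivotal'' buyers relative to $i$ — each item, when first irrevocably assigned, could have instead gone to $i$ had $i$ preceded that assignment — yields that $i$ captures her full marginal contribution with probability at least $\frac{1}{k+1}$ over the random internal ordering, hence the expected utility is at least $\frac{1}{k+1}MC_i^o$; combined with $u_i^o \le MC_i^o$ this is exactly $\frac{1}{k+1}$-SF.

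The \textbf{main obstacle} is the lower bound, specifically making the ``at most $k$ pivotal positions'' argument rigorous: I must argue that whenever $MC_i^o > 0$, the event that $\pi^{o_{\prec i}}$ has already consumed all the welfare headroom that $i$ could have provided requires at least one of a bounded ($\le k$) collection of agents to precede $i$, and that a uniformly random insertion of $i$ into the ordering of $B$ places $i$ ahead of all of them with probability $\ge \frac{1}{k+1}$. Handling marginal-nonincreasing multi-unit valuations (where ``pivotal'' is about which unit, not which buyer) and the interaction with feasibility/connectivity in the network (an agent being pivotal both for making $i$ feasible and for grabbing an item) is where the delicate case analysis lives; I expect to need the structure that once a prefix subproblem is solved efficiently, re-optimizing after one more agent changes the allocation in a ``monotone'' way (items only move toward the newcomer), which bounds how much earlier buyers can pre-empt.
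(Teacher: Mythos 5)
Your lower-bound skeleton is the right one and matches the paper's Proposition~\ref{prop:ratio}: condition on the prefix set $B=o_{\prec i}$, note that when \emph{no} item has been sold within $B$ the constraints are vacuous and $u_i^o$ equals the unconstrained marginal $SW^\ast(\boldsymbol{\theta}',B\cup\{i\})-SW^\ast(\boldsymbol{\theta}',B)$, use IR otherwise, and average with the Shapley weights. But the entire difficulty of the theorem sits in the step you flag as your ``main obstacle,'' and your sketch of it does not work. First, the ``uniformly random insertion of $i$ ahead of the $\le k$ pivotal assignments'' is inconsistent with your own conditioning: once you fix $o_{\prec i}=B$, agent $i$ sits after all of $B$, and the only randomness left is the internal ordering of $B$; there is no insertion of $i$ to speak of. Second, the bad event is not ``a unit that $i$ wanted was grabbed'': as soon as \emph{any} item is sold in the prefix, the only guarantee is $u_i^o\ge 0$, and one can have $u_i^o=0$ while $MC_i^o>0$ even for a buyer who would never win an item, because her Shapley contribution may come purely from connecting high-value buyers to the seller --- so pivotality defined via ``this item could have gone to $i$'' is the wrong notion. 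What is actually needed, and what the paper proves in Proposition~\ref{prop:unsold}, is that the probability that PDA run on the reduced profile $\boldsymbol{\theta}'_B$ sells \emph{nothing at all} is at least $\frac{1}{k+1}$. That proof is a global argument: let $M$ be the ($\le k$) buyers with the highest reports; once all of $M$ is feasible the welfare has peaked and no subsequently traversed agent can win, so whether anything sells is decided before that moment; a swap/counting argument on the position of $s$ relative to $M$ pairs each ``nothing sold'' group of orders with $|M|$ groups in which a sale occurs, giving a conditional ratio $\frac{1}{|M|+1}\ge\frac{1}{k+1}$; the remaining orders are handled by recursing on the prefix before the agent who makes the last member of $M$ feasible, with explicit stopping cases. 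None of this appears in your proposal, and it is not recoverable from a per-order argument: per-order bounds of the form $u_i^o\ge\frac{1}{k+1}MC_i^o$ are simply false.

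Your upper bound also leans on a claim that is false for PDA: that $\pi^{o_{\prec i}}$ is efficient for the prefix, i.e.\ $SW^\ast(\boldsymbol{\theta}',o_{\prec i},\pi^{o_{\prec i}})=SW^\ast(\boldsymbol{\theta}',o_{\prec i})$. In PDA only the currently traversed buyer's coordinate of the constrained optimum is implemented; hypothetical assignments to earlier buyers are discarded. Hence a high-value buyer who becomes feasible only after her own turn never actually receives items but soaks them up in every later re-optimization, while items already sold to lower-value buyers cannot be reclaimed. For $k\ge 2$ this produces orders in which $i$ is still processed (an item remains unsold) yet the constrained prefix optimum is strictly below the unconstrained one --- e.g.\ a buyer with marginals $(50,49)$ buys one unit before a connector makes a buyer with marginals $(100,90)$ feasible; the constrained prefix optimum is $150$ while the unconstrained one is $190$. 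The per-order inequality $u_i^o\le MC_i^o$ may well still hold, but it needs an actual argument (monotonicity of an agent's marginal in the set of freely allocable items under non-increasing marginal valuations), not the efficiency claim; alternatively, follow the paper and compare expectations against the Shapley formula directly rather than order by order.
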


Before proving the theorem, first notice that the Shapley contribution of a player is the expected marginal social welfare increase without any constraints on previous allocation. It is equivalent to keep a zero allocation $0^N$ as the constraint, so that the Shapley contribution can also be written as:
\begin{align*}
    \phi_i(\boldsymbol{\theta}') 
    = & \sum_{B\subseteq V\setminus\{i\}} \Big( \frac{|B|!(|V|- |B| -1)!}{|V|!} \\
    & \quad \cdot \big(\mathsf{SW}^\ast(\boldsymbol{\theta}', B\cup\{i\}, 0^N) - \mathsf{SW}^\ast(\boldsymbol{\theta}', B, 0^N)\big)\Big).
\end{align*}

Now we prove Theorem~\ref{thm:PDA-SF} through two propositions: we first show that $\epsilon$ is associated to the lower bound of ratio that \emph{all} items are unsold, and then show that this ratio is not less than $\frac{1}{k+1}$.
In the proof, we denote $\mu(\boldsymbol{\theta}')$ as the rate that no items are sold in PDA for reported type profile $\boldsymbol{\theta}'$, and denote $\boldsymbol{\theta}'_B$ as the induced reported type profile that sets all agents' reports who are not in coalition $B$ as $nil$.

\begin{proposition}
    \label{prop:ratio}
    When a buyer reports her true type, the lower bound of the ratio between her expected utility in PDA and her Shapley contribution is no less than the lower bound of rate that all items are unsold, and no greater than one.
\end{proposition}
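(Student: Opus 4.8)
The plan is to prove the stronger two‑sided per‑profile estimate: whenever buyer $i$ reports truthfully, with $\boldsymbol{\theta}'=(\theta_i,\boldsymbol{\theta}'_{-i})$,
$$\Big(\inf_{\boldsymbol{\theta}''}\mu(\boldsymbol{\theta}'')\Big)\cdot\phi_i(\boldsymbol{\theta}')\ \le\ u_i\big(\theta_i,\boldsymbol{\theta}',(\pi,p)\big)\ \le\ \phi_i(\boldsymbol{\theta}'),$$
and then take the infimum over profiles on the left to recover the statement (the quantity $\inf_{\boldsymbol{\theta}''}\mu(\boldsymbol{\theta}'')$ being exactly ``the lower bound of the rate that all items are unsold''). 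Decompose along the random order: $u_i=\frac{1}{|O(V)|}\sum_{o\in O(V)}u_i^o$ and $\phi_i=\frac{1}{|O(V)|}\sum_{o\in O(V)}MC_i^o$. From the proof of Theorem~\ref{thm:PDA-IRIC}, if $i$ is reached in $o$ (i.e.\ strictly fewer than $k$ items are allocated before PDA traverses $i$) then $u_i^o=SW^\ast(\boldsymbol{\theta}',o_{\preceq i},\pi^{o_{\prec i}})-SW^\ast(\boldsymbol{\theta}',o_{\prec i},\pi^{o_{\prec i}})\ge 0$, while if $i$ is not reached then $\pi_i=p_i=0$ and $u_i^o=0$; and $MC_i^o=SW^\ast(\boldsymbol{\theta}',o_{\preceq i},0^N)-SW^\ast(\boldsymbol{\theta}',o_{\prec i},0^N)\ge 0$ by monotonicity of $SW^\ast(\boldsymbol{\theta}',\cdot)$ in the coalition.

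For the upper bound I would show $u_i^o\le MC_i^o$ for every $o$. This is immediate when $i$ is not reached. When $i$ is reached, the crux is the lemma that committing the running allocation as an irrevocable floor never costs social welfare on the prefix on which it was built: $SW^\ast(\boldsymbol{\theta}',o_{\prec i},\pi^{o_{\prec i}})=SW^\ast(\boldsymbol{\theta}',o_{\prec i},0^N)$. Granting it, $u_i^o=SW^\ast(\boldsymbol{\theta}',o_{\preceq i},\pi^{o_{\prec i}})-SW^\ast(\boldsymbol{\theta}',o_{\prec i},0^N)\le SW^\ast(\boldsymbol{\theta}',o_{\preceq i},0^N)-SW^\ast(\boldsymbol{\theta}',o_{\prec i},0^N)=MC_i^o$, the single inequality being that dropping a floor cannot decrease an optimum. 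Averaging over $o$ yields $u_i\le\phi_i$. The lemma itself I would prove by induction on the position in $o$: every item that PDA irrevocably assigns to a buyer is one that an efficient allocation of the then‑feasible buyers assigns to her out of the capacity that is still free, so — because valuations are marginal non‑increasing and $k$ has not been exhausted while the prefix is still being traversed — that assignment never has to be undone when later buyers, and the buyers they unlock, enter.

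For the lower bound I would keep only the orders in which no item is sold before $i$ is reached, discarding the others using $u_i^o\ge 0$ (individual rationality, Theorem~\ref{thm:PDA-IRIC}). On a kept order $\pi^{o_{\prec i}}=0^N$, hence $u_i^o=SW^\ast(\boldsymbol{\theta}',o_{\preceq i},0^N)-SW^\ast(\boldsymbol{\theta}',o_{\prec i},0^N)=MC_i^o$, a quantity that depends on $o$ only through the set $S:=o_{\prec i}$; denote it $MC_i(S)$. The key observation is that, while PDA traverses a prefix contained in $S$, its feasibility and efficient‑allocation computations use only the reports of agents in $S$, so that portion of the run coincides with running PDA on the induced sub‑profile $\boldsymbol{\theta}'_S$; therefore, conditioned on $o_{\prec i}=S$, the event ``no item sold before $i$'' has probability exactly $\mu(\boldsymbol{\theta}'_S)$. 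Since $\Pr[o_{\prec i}=S]=\frac{|S|!\,(|V|-|S|-1)!}{|V|!}$ and $\mu(\boldsymbol{\theta}'_S)\ge\inf_{\boldsymbol{\theta}''}\mu(\boldsymbol{\theta}'')$ for every $S$,
$$u_i\ \ge\ \sum_{S\subseteq V\setminus\{i\}}\frac{|S|!\,(|V|-|S|-1)!}{|V|!}\,\mu(\boldsymbol{\theta}'_S)\,MC_i(S)\ \ge\ \Big(\inf_{\boldsymbol{\theta}''}\mu(\boldsymbol{\theta}'')\Big)\,\phi_i(\boldsymbol{\theta}'),$$
using the combinatorial form of $\phi_i$ recalled just above the proposition. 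Hence $u_i/\phi_i\ge\inf_{\boldsymbol{\theta}''}\mu(\boldsymbol{\theta}'')$ for all profiles, as claimed.

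The step I expect to be the real obstacle is the welfare‑preservation lemma used in the upper bound. PDA's running allocation need not itself be an efficient allocation of its feasible set, and once all $k$ items have been committed the statement genuinely fails (which is harmless, since the auction then halts); so the induction must exploit that a strictly positive capacity remains throughout every prefix PDA continues past, together with the concavity (marginal non‑increasingness) of the valuations, to argue that the already‑committed items are precisely the ones an efficient allocation of the enlarged feasible set would keep.
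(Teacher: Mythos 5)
Your lower‑bound half is essentially the paper's own argument (condition on the prefix set $S=o_{\prec i}$, discard via IR the orders in which something is sold before $i$, identify the conditional no‑sale probability with $\mu(\boldsymbol{\theta}'_S)$, and compare with the combinatorial form of $\phi_i$), and that part is fine. The genuine gap is in your upper bound: the welfare‑preservation lemma $SW^\ast(\boldsymbol{\theta}',o_{\prec i},\pi^{o_{\prec i}})=SW^\ast(\boldsymbol{\theta}',o_{\prec i},0^N)$ is false even in the regime you restrict to, namely when residual capacity is still positive and $i$ is actually reached. Counterexample: $k=2$; edges $\{s,A\},\{A,B\},\{s,X\},\{X,C\},\{s,D\}$; marginal values $A:(5,4)$, $B:(6,0)$, $X:(0,0)$, $C:(10,9)$; order $o=(s,C,B,A,X,D)$. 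At their turns $C$ and $B$ are infeasible and committed nothing; at $A$'s turn the feasible buyers are $\{A,B\}$ and the unique efficient allocation gives one item to each, so $A$ is irrevocably committed one item; at $X$'s turn $C$ becomes feasible, the floor‑constrained optimum is $A{:}1,\,C{:}1$ with welfare $15$, so $X$ is committed nothing and one item is still unsold when $D$ is reached. Yet $SW^\ast(\boldsymbol{\theta}',o_{\prec D},\pi^{o_{\prec D}})=15<19=SW^\ast(\boldsymbol{\theta}',o_{\prec D},0^N)$, because without the floor the prefix optimum gives both items to $C$. So the item committed to $A$ does have to be ``undone'' by the prefix optimum once a later connector unlocks a high‑value buyer --- exactly the step your induction sketch assumes cannot happen; concavity plus unexhausted capacity does not prevent it. And without that equality your fallback ``dropping a floor cannot decrease an optimum'' points the wrong way, since it lower‑bounds rather than upper‑bounds the subtracted term, so the upper bound $u_i^o\le MC_i^o$ is left unproven.

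The target inequality itself is still true in the homogeneous setting, but it needs a different argument: what you need is that the welfare loss caused by the fixed floor is monotone non‑decreasing as the coalition grows from $o_{\prec i}$ to $o_{\preceq i}$. With non‑increasing marginals the floor‑constrained optimum equals the sum of the forced marginals plus the top $q$ of the remaining marginal pool (where $q$ is the residual capacity), and since the $q$‑th largest marginal after deleting the $k-q$ forced ones is at least the $k$‑th largest overall, adding any new marginal (from $i$ or the buyers she unlocks) raises the unconstrained optimum by at least as much as the constrained one; summing over the new marginals gives $u_i^o\le MC_i^o$, and in the counterexample above this indeed holds even though your lemma fails. For calibration, the paper does not really prove this half either --- it simply asserts $\mathbb{E}_{PDA}(u_i)\le\phi_i$ after displaying the lower bound --- so your instinct that the upper bound requires an argument is sound, but as written your proof of it rests on a false lemma and would not go through.
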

\begin{proposition}
    \label{prop:unsold}
    The lower bound of the ratio that all items are unsold of PDA is $\frac{1}{k+1}$ for selling $k$ items.
\end{proposition}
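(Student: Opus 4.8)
The plan is to reduce the claim to a statement about a uniformly random permutation of $V$ and then prove it by induction on the number of items $k$. The starting observation is that while PDA has not yet sold any item during its scan of an order $o$, the irrevocable previous allocation $\pi^{o_{\prec i}}$ is identically $0^N$, so the allocation PDA assigns to the currently processed buyer $i$ equals the unconstrained efficient allocation within the feasible set $F(\boldsymbol{\theta}', o_{\preceq i})$. Arguing inductively along $o$, this gives the characterisation: \emph{no item is sold under $o$} if and only if, for every $i\in N$, buyer $i$ receives $0$ items in the efficient allocation restricted to $F(\boldsymbol{\theta}', o_{\preceq i})$. Hence $\mu(\boldsymbol{\theta}')$ is exactly the probability of this event over a uniform $o$, and it suffices to lower-bound that probability by $\frac{1}{k+1}$; ideally one proves the stronger fact that the number of items PDA sells is stochastically dominated by the uniform distribution on $\{0,1,\dots,k\}$.

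The workhorse lemma is a monotonicity property: if a buyer $i$ is feasible in a coalition $C$ but receives $0$ items in the efficient allocation of $C$, then $i$ still receives $0$ items in the efficient allocation of any larger coalition $C'\supseteq C$. The reason is that enlarging the coalition only enlarges the pool of ``marginal-value slots'' of the feasible buyers (a newly feasible buyer only adds slots, and marginal valuations are non-increasing), so the number of slots whose marginal value is at least $v'_i(1)$ cannot fall below $k$ once it has reached $k$. A corollary: any buyer who receives a positive quantity in the efficient allocation of the whole feasible population $N$ — there are at most $k$ such buyers — is a winner in the efficient allocation of \emph{every} prefix in which she is feasible, hence triggers a sale the instant she becomes feasible at her own position in $o$. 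This singles out the (at most $k+1$) ``critical'' agents: the efficient winners together with $s$.

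For the probability bound I would induct on $k$. The base case $k=0$ is trivial. For the inductive step, condition on the first position (if any) at which PDA sells; if that sale awards $q^\ast\ge 1$ items to a buyer $w$, then, conditioned on the part of $o$ up to and including $w$, the continuation of PDA is governed by a PDA-style process that still has to sell $k-q^\ast\le k-1$ items among the not-yet-processed agents, to which (a suitably general form of) the induction hypothesis applies and bounds the conditional ``no further sale'' probability from below by $\frac{1}{k-q^\ast+1}\ge \frac1k$. The probability that a sale occurs at all is then controlled through the previous step by tracking where the first sale-triggering buyer sits relative to the critical agents. The tight instance — $s$ adjacent to $k$ buyers of equal value with $k$ items, where ``no item sold'' is exactly ``$s$ comes last among those $k+1$ agents'' and $\mu=\frac{1}{k+1}$ — shows both where the factor $\frac{1}{k+1}$ originates and that it cannot be improved.

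The main obstacle is precisely the coupling in the last step between the network-feasibility structure and the allocation. Even when $s$ follows every efficient winner in $o$ — an event of probability at least $\frac{1}{k+1}$, so that no globally efficient winner can win — a buyer \emph{outside} the efficient-winner set may still win items in a ``depleted'' prefix in which her higher-value competitors have not yet become feasible (for instance, a low-value direct neighbour of $s$ who is momentarily the only feasible buyer). Ruling out that such secondary sales drag $\mu(\boldsymbol{\theta}')$ below $\frac{1}{k+1}$ is the crux; it appears to require either the conditional coupling with the smaller PDA instance sketched above or a more delicate choice of the $k+1$ agents whose relative order certifies that no item is sold.
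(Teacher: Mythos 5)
Your proposal stops short of a proof at exactly the point you flag yourself, and that point is the heart of the matter. The preparatory steps are fine: while no item has been sold, the constraint $\pi^{o_{\prec i}}=0^N$ is vacuous; the monotonicity lemma (a feasible buyer who gets nothing in the efficient allocation of a coalition gets nothing in any larger coalition) holds for non-increasing marginal values; hence ``no sale under $o$'' forces every globally efficient winner to be infeasible at her own position, and the event that $s$ comes after all (at most $k$) efficient winners has probability at least $\frac{1}{k+1}$. But, as you concede, that event does not imply ``no sale'': a low-value buyer may win in a depleted prefix before her stronger competitors become feasible, and nothing in your argument bounds the probability mass lost to such secondary sales. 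The induction on $k$ you sketch does not repair this: conditioning on the first sale and bounding the probability of no \emph{further} sale by $\frac{1}{k-q^\ast+1}$ only controls the upper tail of the number of items sold; it says nothing about $\Pr[\text{zero items sold}]$, which is precisely the quantity $\mu(\boldsymbol{\theta}')$ to be bounded from below (and is the $j=1$ case of the stochastic dominance you hope for, so the dominance cannot be bootstrapped this way either). So the crux you name in your last paragraph is a genuine, unresolved gap, not a technicality.

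For comparison, the paper closes exactly this gap with a different decomposition. Let $M$ be the (at most $k$) buyers reporting the $k$ highest values; once all of $M$ is feasible with nothing yet sold, no later traversal can sell. The orders are then split: for the group in which all of $M$ becomes feasible exactly when $s$ is traversed, a swap argument exchanging $s$ with members of $M$ (keeping all other agents fixed) gives a conditional no-sale probability of $1/(|M|+1)\geq 1/(k+1)$; for every remaining order $o$, letting $x(o)$ be the agent whose traversal makes the last member of $M$ feasible, nothing can be sold at or after $x(o)$, so the no-sale event depends only on the prefix $o_{\prec x(o)}$, and the argument recurses on the induced sub-profile $\boldsymbol{\theta}'_{o_{\prec x(o)}}$ — your ``depleted prefix'' problem is pushed into a strictly smaller instance, with explicit termination when the top-$k$ feasible buyers are all initial buyers or when at most $k$ buyers are feasible. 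Some such recursion (or your conditional coupling made precise) is indispensable; without it the $\frac{1}{k+1}$ bound is asserted but not proved.
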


\begin{proof}[{Proof of Proposition~\ref{prop:ratio}}]
    For $\forall o \in O(V)$, $\forall i \in N$, given $\boldsymbol{\theta}' = (\theta_i, \boldsymbol{\theta}'_{-i})$,  $i$'s marginal social welfare increase is affected by $\pi^{o_{\prec i}}$, the allocations of buyers preceding $i$. When no items are sold in $o_{\prec i}$ (with probability $\mu(\boldsymbol{\theta}'_{o_{\prec i}})$), i.e., $\pi^{o_{\prec i}}$ is a zero vector, $i$ will get $\mathsf{SW}^\ast(\boldsymbol{\theta}',o_{\preceq i}, 0^N) - \mathsf{SW}^\ast(\boldsymbol{\theta}', o_{\prec i}, 0^N)$ as her utility; if some items have already been sold before traversing to $i$, then $i$'s utility is at least $0$ by IR.
    Hence, for the expected utility of buyer $i$, we have 
    \begin{align*}
        \mathbb{E}_{\mathsf{PDA}}(u_i) 
        \geq & \sum_{B\subseteq V\setminus\{i\}} \Big( \frac{|B|!(|V|- |B| -1)!}{|V|!} \mu(\boldsymbol{\theta}'_{B}) \\
        & \cdot\big(\mathsf{SW}^\ast(\boldsymbol{\theta}', B\cup\{i\}, 0^N) - \mathsf{SW}^\ast(\boldsymbol{\theta}', B, 0^N)\big) \Big).
    \end{align*}
    Compared to the formula of the Shapley contribution, 
    $$\inf_{\boldsymbol{\theta}'_{B} \in \Theta}\mu(\boldsymbol{\theta}'_{B}) \cdot \phi_i ~\leq~ \mathbb{E}_{\mathsf{PDA}}(u_i) ~\leq~ \phi_i.$$
    Therefore, a lower bound of the ratio between an arbitrary buyer's expected utility in PDA and her Shapley contribution is $\inf_{\boldsymbol{\theta}'_{B} \in \Theta}\mu(\boldsymbol{\theta}'_{B})$, which is exactly the lower bound of the rate that all items are unsold.
\end{proof}

\begin{proof}[{Proof of Proposition~\ref{prop:unsold}}]

    If $n \leq k$, then the ratio that all items are unsold is at least $\frac{1}{n+1}\geq \frac{1}{k+1}$ (for the cases where $s$ herself is the last agent in the order). 
    
    If $n > k$, let $M$ be the set of buyers who reports the $k$ highest valuations, which must satisfy $|M|\leq k$. Then, we can first observe that if no item is sold when buyers in $M$ all become feasible in the loops of PDA, then all items are remain unsold after PDA terminates.

    Considering a group of orders where buyers in $M$ all become feasible at the same time when we traverse at $s$, then no items can be sold. By changing the order of the seller $s$ and buyers in $M$ while keeping the order of all other buyers, we can get another group of orders where at least one buyer in $M$ is feasible as long as we traverse at her, and hence at least one item must be sold (to the buyer would be traversed at last among $M$ or some other buyer who has been traversed before her) in these orders. The number of the orders in the latter group is $|M|$ times that in the previous group. Denote the probability of selecting these two groups of orders as $a$; then the conditioned ratio that all items are unsold among these orders is $\frac{1}{|M|+1} \geq \frac{1}{k+1}$.

    We then check the orders that not being included above, notated as $O^\ast$, which totally has a probability of $(1-a)$ to appear. 
    In an arbitrary order $o \in O^\ast$, let $x(o)$ be the one who brings the last buyer among $M$ to be feasible, 
    i.e. $M \subseteq F(\boldsymbol{\theta}',o_{\preceq x(o)})$ but $M \nsubseteq F(\boldsymbol{\theta}', o_{\prec x(o)}) $. We also have $\{s\}\cup M \subseteq o_{\prec x(o)}$.
    As the maximum social welfare reaches the peak as $\mathsf{SW}^*(\boldsymbol{\theta}', o_{\preceq x(o)}, \pi^{o_{\prec x(o)}})$ when traversing to $x(o)$, $x(o)$ and buyers after $x(o)$ cannot win any items. Hence, whether items are sold in $o$ only depends on $o_{\prec x(o)}$. Notice that $M \nsubseteq F(\boldsymbol{\theta}',o_{\prec x(o)})$, so for any order $o'$ where only the order of $o_{\prec x(o)}$ is shuffled in $o$, we have $o'\in O^*$. Namely, the set $\{ o'\mid x(o') = x(o), o'_{\prec x(o')} = o_{\prec x(o)} \}\subseteq O^*$ and we can use $o$ as the representative element to denote the set. Then, $O^*$ can be partitioned by a series of representative orders $o_1$, $o_2$, $\dots$, $o_l$, where $l$ is the number of different representative orders. For each representative order $o$, let $\rho(o)$ be the probability that selects an order belongs to the set represented by $o$ in $O^*$.
    
    Then taking all parts together, the total probability of all items are unsold is,
    $$\mu(\boldsymbol{\theta}') \geq a \cdot \frac{1}{k+1} + (1-a) \cdot \sum_{o\in \{o_1, o_2, \dots, o_l\}} \rho(o) \mu(\boldsymbol{\theta}'_{o_{\prec x(o)}}).$$

    Notice that if for any $o\in \{o_1, o_2, \dots, o_l\}$, we can have    
    $$\mu(\boldsymbol{\theta}'_{o_{\prec x(o)}}) \geq 1/(k+1), \text{ and then}$$ 
    $$\mu(\boldsymbol{\theta}') \geq a/(k+1)+(1-a)/(k+1) = 1/(k+1).$$
    To show that it is true, we can recursively analyze arbitrary $o_{\prec x(o)}$ as above and stop with the following conditions:
    \begin{enumerate} 
        \item The buyers reports the $k$ highest valuation among feasible buyers in $o_{\prec x(o)}$ are all initial buyers, then $a = 1$, and hence $\mu(\boldsymbol{\theta}'_{o_{\prec x(o)}}) \geq 1/(k+1)$;
        \item There are less than or equal to $k$ feasible buyers in $o_{\prec x(o)}$ (i.e. $|F(\boldsymbol{\theta}', o_{\prec x(o)}\setminus\{s\}| \leq k$), then 
        \begin{align*}
            \mu(\boldsymbol{\theta}'_{o_{\prec x(o)}}) & \geq 1/(|F(\boldsymbol{\theta}', o_{\prec x(o)})\setminus\{s\}| + 1) \\
            & = 1/|F(\boldsymbol{\theta}', o_{\prec x(o)})| \geq 1/(k+1).
        \end{align*}
    \end{enumerate}
    
    Therefore, $\mu(\boldsymbol{\theta}'_{o_{\prec x(o)}}) \geq 1/(k+1)$ for arbitrary $o_{\prec x(o)}$, indicating that $\mu(\boldsymbol{\theta}') \geq 1/(k+1)$.    
\end{proof}

\begin{figure}[t]
\centering
\includegraphics[width=0.67\columnwidth]{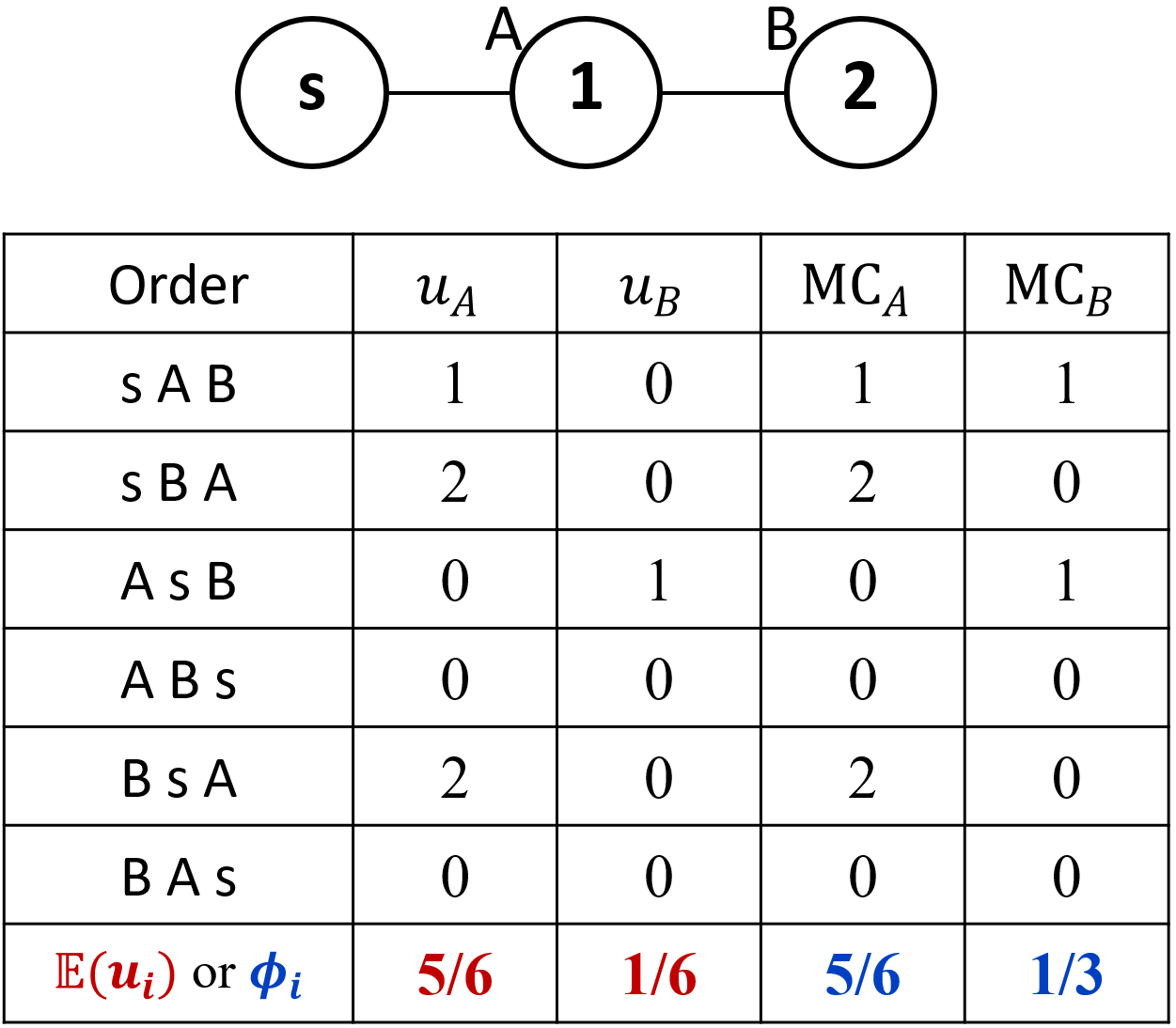}
\caption{An example of PDA that shows $\frac{1}{2}$ is a tight bound when selling a single item. The graph illustrates a single-chain network containing $s, A, B$; the table shows the utility and marginal contribution of $A$ and $B$ under different choices of orders, as well as their expected utilities and Shapley contribution. We can find that the expected utility is half of her Shapley contribution for buyer $B$.}
\label{fig:tight}
\end{figure} 

Finally, Theorem~\ref{thm:PDA-SF} can be proved through Proposition~\ref{prop:ratio} and Proposition~\ref{prop:unsold}.  
We would also like to point out that the bound is consequently tight when $k = 1$ (shown through a simple example in Figure~\ref{fig:tight}), while the tightness of the bound when $k\geq 2$ remains an open problem.

\section{Extension to Combinatorial Auctions}

In this section, we demonstrate that PDA can be extended into a combinatorial setting as Combinatorial PDA (CPDA). 

Consider the setting that a seller $s$ sells $k$ items $\mathcal{K} = \{ 1, 2, \cdots, k\}$ via a social network. 
Indicator vector $q \in \{0, 1\}^k$ denotes a bundle of items, where $q_j$ represents whether item $j$ is in the bundle $q$.
In particular, $q = \boldsymbol{0}$ means the empty bundle, and $q = \boldsymbol{1}$ means the bundle of all items.
Each buyer $i \in N$ has a private valuation function $v_i(q)$ for any bundle $q$, and we assume $v_i(\boldsymbol{0}) = 0$. Now, for a given reported type profile $\boldsymbol{\theta}'$, $\pi_i(\boldsymbol{\theta}')$ is an indicator vector that represents the bundle of items that allocated to buyer $i$. 

Similar to the homogeneous multi-item setting, we need to maintain the allocation of sold items when selling the remaining items. 
We consider the irrevocable $\pi^{o_{\prec i}}$ when maximizing the social welfare in the procedure of CPDA. When traversing to $i$, the efficient allocation $\pi^\ast(\boldsymbol{\theta}', B, \pi^{o_{\prec i}})$ can be solved as following
\begin{align*}
    \underset{\pi}{\text{maximize}} & \ \mathsf{SW}(\boldsymbol{\theta'}, \pi)\\
    s.t. \quad&\pi \in \Pi(\boldsymbol{\theta'}, \mathcal{K}) ,\\
                     &\pi_{i \notin F(\boldsymbol{\theta'}, B)} = \boldsymbol{0},\\
                     &\pi_{jl} \geq \pi_{jl}^{o_{\prec i}}, \forall j\in o_{\prec i}, \forall l\in \mathcal{K}.    
\end{align*}

Let $\mathsf{SW}^\ast(\boldsymbol{\theta}', B, \pi^{o_{\prec i}})$ be the corresponding social welfare.
Then, CPDA follows the same procedure of PDA with a generalized form of $\pi$. 

\begin{algorithm}[htb]
    \caption{Combinatorial PDA (CPDA)}
    \label{alg:CPDA}    
        ~\textbf{Input}: $\mathcal{K}, \boldsymbol{\theta}'$
        
        ~\textbf{Output}: $\pi,p$
    \begin{algorithmic}[1] 
        \STATE Initialize all $\pi_i = \boldsymbol{0}$, $p_i = 0$; uniformly select \small{$o\in O(V)$}. 
        \FOR{$i$ in $o$ and $i \neq s$}
            \STATE $\pi_i \leftarrow \pi_i^\ast(\boldsymbol{\theta}', o_{\preceq i}, \pi^{o_{\prec i}})$
            \STATE {\small $p_i \leftarrow \mathsf{SW}^*(\boldsymbol{\theta}', o_{\prec i}, \pi^{o_{\prec i}}) - \mathsf{SW}^*(\boldsymbol{\theta}', o_{\preceq i}, \pi^{o_{\prec i}}) + v'_i(\pi_i)$}
            \STATE \textbf{if} all items are allocated \textbf{then} \textbf{break}
        \ENDFOR
    \end{algorithmic}
\end{algorithm}

\begin{theorem}
\label{thm:CPDA-IRIC}
    CPDA is IR, IC and $\frac{1}{n}$-Shapley fair.
\end{theorem}

The proof procedure of IR and IC for CPDA is the same as that for PDA, so the full proof is omitted.

For Shapley-fairness of CPDA, we know that it is bounded by the ratio that no items are sold before an arbitrary agent joins according to Proposition~\ref{prop:ratio}.
Given $\boldsymbol{\theta}'$, $\forall o \in O(V)$, $\forall i \in N$, $\mu(\boldsymbol{\theta'}_{o_{\prec i}}) \geq \frac{1}{n}$ because $o_{\prec i} \subseteq V\setminus\{i\}$ contains at most $n$ agents, and no items would be sold when the seller $s$ is the last to participate in $o_{\prec i}$.
The fairness of CPDA is superior to the classic VCG~\cite{vickrey1961counterspeculation,clarke1971multipart,groves1973incentives}, which is not Shapley fair.
Specially, since PDA is a special case of CPDA, then it is also $\frac{1}{n}$-SF, which gives a tighter bound than $\frac{1}{k+1}$ when $n \leq k$. 

\section{Discussion}

This paper designs a diffusion auction that fairly rewards participants to better motivate their involvement. 
To do so, our solution sacrifices some seller revenue, potentially causing deficits when items remain unallocated.
Concretely, with an order $o$, the revenue of the seller $Rev^o(\boldsymbol{\theta'})$ is equal to
{\small\[  \big( \mathsf{SW}(\boldsymbol{\theta'}, \pi^o) - \mathsf{SW}(\boldsymbol{\theta'}, \pi^\ast(\boldsymbol{\theta'}, V, \pi^o)) \big) +  \mathsf{SW}^\ast(\boldsymbol{\theta'}, o_{\preceq s}), \]}
and then the expected revenue of PDA $\mathbb{E}_{\text{PDA}}$ is:
{\small \[ \phi_s(\boldsymbol{\theta'}) - \frac{1}{|O(V)|}\sum_{o \in O(V)} [ \mathsf{SW}(\boldsymbol{\theta'}, \pi^\ast(\boldsymbol{\theta'}, V, \pi^o)) - \mathsf{SW}(\boldsymbol{\theta'}, \pi^o)]. \]}

Intuitively, the expected revenue of PDA equals to the seller's Shapley contribution minus the expected loss of social welfare due to the unsold items. 
The deficit does limit the applicability of PDA if the market's goal is for profit. However, if the goal is getting more people to know the auction, then the deficit could be a kind of advertisement cost.

Unfortunately, the deficit issue appears unavoidable. In a 1-SF mechanism, participants' utilities equal their Shapley contributions, summing to the maximal social welfare. Avoiding a seller deficit requires efficient allocation, but efficiency conflicts with IC and non-deficit~\cite[Theorem 2]{LI2022103631}. Thus, 1-SF is incompatible with these conditions.

Figure~\ref{fig:rev-eff} highlights a trade-off between revenue and allocation efficiency. The revenue upper bound for an $\epsilon$-SF mechanism, $\epsilon \cdot \phi_s + (1-\epsilon) \cdot \mathsf{SW}^\ast(\boldsymbol{\theta'})$, decreases with $\epsilon$, meaning stronger Shapley fairness reduces revenue. An open question remains if non-deficit is a strict constraint:
\begin{center}
    \textit{
        Whether there exists a Shapley fair diffusion auction  \ \ \\
        \ that is IC, IR, and non-deficit? 
    }
\end{center}

\begin{figure}[ht]
    \centering
    \includegraphics[width=.9\columnwidth]{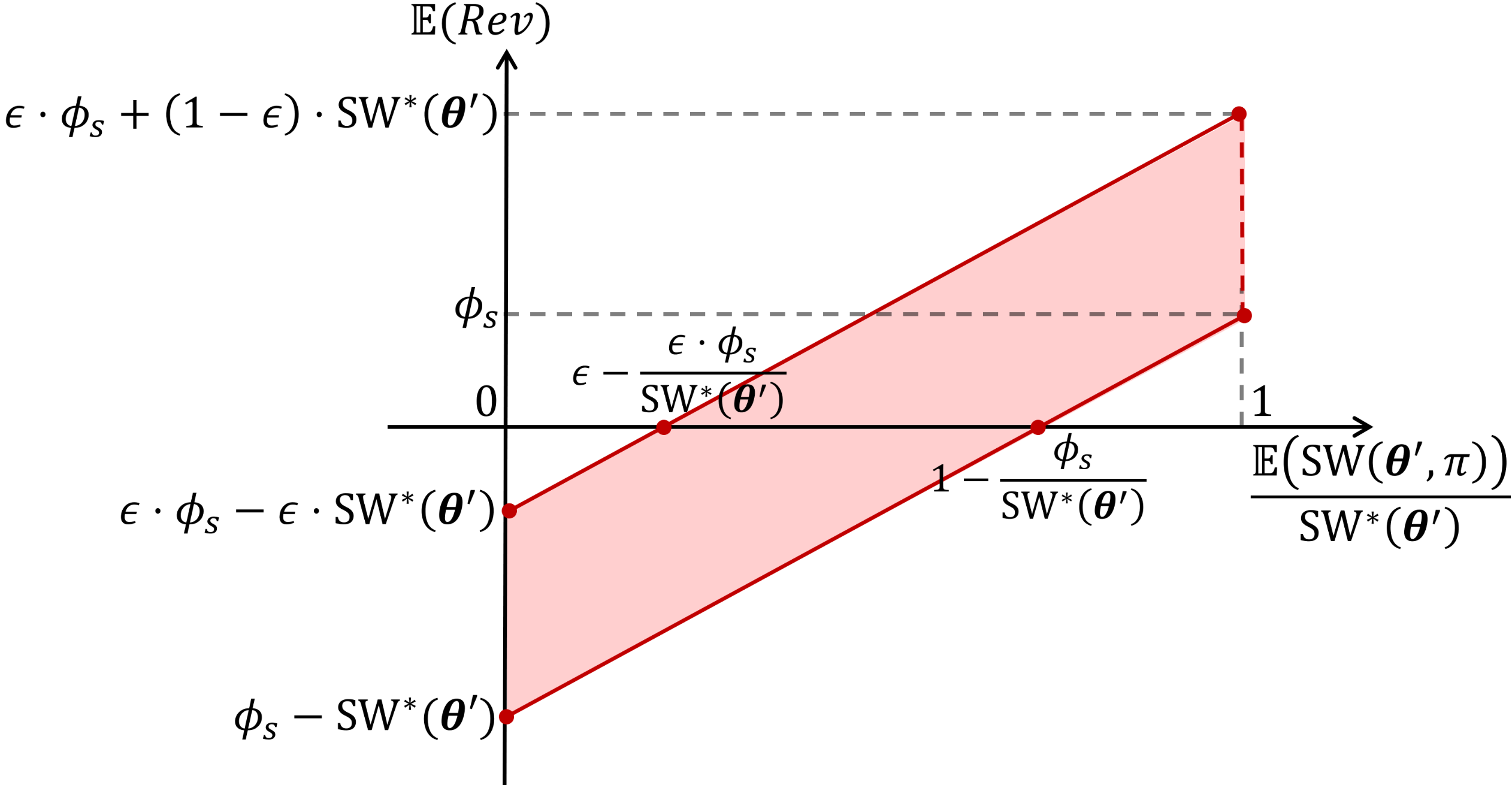}
    \caption{The relationship between expected revenue and the allocation efficiency of Shapley fair diffusion auction mechanisms. The horizontal axis represents the efficiency of allocations and the vertical axis represents the expected revenue. The red parallelogram shows the space of $\epsilon$-Shapley fair mechanisms. Specially, for $1$-Shapley fair mechanisms, the space is an inclined line segment at the bottom.}
    \label{fig:rev-eff}
\end{figure}

\section*{Acknowledgments}
This work was supported by the Science and Technology Commission of Shanghai Municipality (No. 23010503000), the Shanghai Frontiers Science Center of Human-centered Artificial Intelligence (ShangHAI), and JST ERATO (Grant Number JPMJER2301).


\bibliography{aaai2026}
\end{document}